\theoremstyle{plain}
\newtheorem{theorem}{Theorem}
\newtheorem{lemma}{Lemma}
\newtheorem{definition}{Definition}
\newtheorem{corollary}{Corollary}
\theoremstyle{remark}
\newtheorem{remark}{Remark}
\numberwithin{equation}{section}
\begin{document}
\title[]{A pulse fishery model with closures as function of the catch: Conditions for sustainability}
\author[]{Fernando C\'ordova--Lepe}
\author[]{Rodrigo Del Valle}
\author[]{Gonzalo Robledo}
\address{Facultad de Ciencias B\'asicas, Universidad Cat\'olica del Maule, Talca, Chile. Avenida San Miguel 3605, Casilla 617 Talca, Chile}
\address{Departamento de Matem\'aticas, Facultad de Ciencias, Universidad de
  Chile, Casilla 653, Santiago, Chile}
\email{fcordova@ucm.cl,rvalle@ucm.cl,grobledo@uchile.cl}
%\thanks{This research has been partially supported by: FONDECYT 1080034, DGI MATH UNAP 2008 and a  fellow\begin{document}ship  of the Israel Council for Higher Education and UNESCO}
%\subjclass{34E10,34C27,34C41}
\keywords{Fisheries management, Impulsive differential equations, Stability, Sustainability}
\date{October 2011}

\begin{abstract}
We present a model of single species fishery which alternates closed seasons with pulse captures. The novelty is that the length of a 
closed season is determined by the stock size of the last capture. The process is described by a new type of impulsive 
differential equations recently introduced. The main result is a fishing effort threshold which determines either the sustainability of the fishery or
the extinction of the resource.
\end{abstract}
\maketitle

\section{Introduction}
\subsection{Preliminaries}
This work proposes a model of management of \emph{closed seasons} (also named \emph{seasonal closures}) for 
fisheries. The FAO's fisheries glossary\footnote{We refer the reader to www.fao.org/fi/glossary.} defines a closed season of a single marine species 
as \emph{the banning of fishing activity} (\emph{in an area or 
of an entire fishery}) \emph{for a few weeks or months, usually to protect juveniles or spawners}. The closed seasons are implemented by a regulator
authority in order to limit the productive factors in certain area during a specific time interval. 

A fishery process could envisage alternated periods of closures and \emph{open seasons}, where the fishing 
is allowed. The closures are introduced according to bioeconomical necessities and its scope considers several 
spatio-temporal possibilities, leading to concepts as: \emph{seasonal closures--no area closure}, 
\emph{short term area closure--no seasonal closure}, \emph{short term time} and \emph{fully--protected area}, see \emph{e.g.}, \cite{Domeier} for 
details. 

This article will consider the special --but important-- case of bioeconomic processes having the following 
property \cite{Yang-L}: \emph{low frequency / short duration of open seasons} combined with \emph{large magnitude of capture}. In 
this framework, we propose a feedback regulatory po\-li\-cy which defines the lenght of the next closure as a function of the present captured biomass. A 
consequence of this regulation is that under certain fishing effort (this concept will be explained later) threshold, the convergence 
towards a constant length closures ensuring 
the ecological sustaintability of the resource is obtained.

The literature shows many cases of a long--term time 
and area closures. An example is the pacific herring (\emph{Clupea pacificus}) fishery, where
the regulatory measures have included very short open seasons as two hours joined with other inputs and outputs restrictions \cite{Pitcher}. Another
example is given by a village--base management program in Vanuatu, where the stocks should be harvested in a sequence of brief openings interspresed 
with several years closures (see \cite{Johannes1}, \cite{Johannes2} for details). Cases of annual fisheries combining short periods with large 
captures, followed by (comparatively) low ones at the rest, are registered. For instance, the fishery of common sardine (\emph{Strangomera bentincki}) and 
anchovy (\emph{Engraulis ringens}) in the southern Chilean coast has a behavior described by the Figure 1.

\begin{center}
\label{tmouna1-0}
\begin{figure}
\label{impulsos-capturas}
\includegraphics[width=10cm, height=6cm]{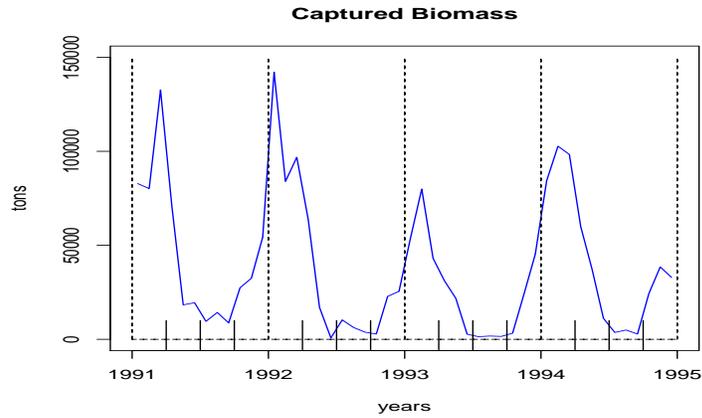}
\caption{Captures of anchovy and common sardine (1991--1995) in southern Chilean coast \cite{Cubillos}. The period with 
the largest fishing mortality is the summer, while the lowest one corresponds to the closed season.} 
\end{figure}
\end{center}

\begin{center}
\label{tmouna1}
\begin{figure}
\label{impulsos-capturas}
\includegraphics[width=10cm, height=6cm]{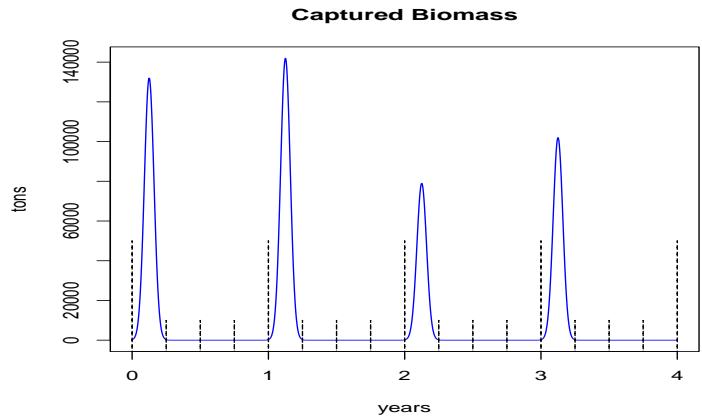}
\caption{Idealized scheme of the previous figure: intense fishing season alternated by absolute 
closures of nine months. If the lenght of openings are small compared with closures and the harvested stock is large, it can 
be considered as a pulse.} 
\end{figure}
\end{center}

The problem of determining the lenght of a closure (in a context of short--term open seasons with intense fishing effort) ensuring the bioeconomical 
sustantability is a complex issue: indeed, short term closures followed by intense captures could induce an overexploitation of the resource. On the other 
hand, long--term closures could have some unexpected drawbacks as: i) \emph{Bio--sustainable economic rent} with negative average \cite{Clark}. ii) Promotion 
of negative \emph{indirect effects} \cite{Bender}, \cite{Higashi} when the resource 
is a top predator. An 
example is given by the closure (1989--1992) of the mollusk \emph{Concholepas concholepas} in the chilean coasts \cite{Stoltz}. iii) The phenomenon known 
as \emph{race for fish}: fishing units try to outdo one another in fishing power and efficiency during the brief openings \cite{Copes}. 

%In a context of long--term time and area closure with fixed fishing effort in the short--term open season, this article 
%proposes a type of feedback regulatory po\-li\-cy which defines the lenght of the next closures as a function of the present captured biomass. A 
%consequence of this regulation is that under certain fishing effort threshold, the convergence towards a constant length closures ensuring 
%the ecological sustaintability of the resource is obtained.

\subsection{Mathematical modeling}
In this paper, we will assume that the fishery process has two different 
time--scales. In the first scale (closed seasons), the growth of a single unstructured marine resource is described by using ordinary 
differential equations. Nevertheless, as the capture has a duration considerably 
shorter than the closures lenght, every open season will be considered as an instant of capture, \emph{i.e.}, a term of a sequence $\{t_{k}\}_{k}$,
this is the second scale (See Figure 2). In consequence, the global process will be described by 
an impulsive differential equation, IDE (see \emph{e.g.}, \cite{Haddad}, \cite{Samoilenko} for details).

IDE equations have been used in the mathematical modeling of processes involving impulsive 
harvesting: \emph{e.g.}, \cite{Bai}, \cite{Berezanski}, \cite{Braverman}, \cite{Dong}, \cite{Zhang-0}, \cite{Zhang-1}, \cite{Zhao-03}, \cite{Zhao} (see 
\cite{Mai-Le} and \cite{Yang-L} for more 
applications of IDE to bioeconomics and ecological processes respectively). In all these references, the following sequence of harvest instants is employed:
\begin{equation}
\label{sequence-1}
t_{k}=k\tau \quad \textnormal{with} \quad k=0,1,\ldots,\quad \textnormal{and} \quad \tau>0,
\end{equation}  
which implies that $\tau=t_{k+1}-t_{k}$ is the time between two consecutive captures.

This paper follows a different approach. Indeed, we introduce a new sequence $\{t_{k}\}_{k}$ of harvest instants, where 
the time for the $(k+1)$--th capture is determined by the amount of the biomass harvested
at $k$--th time (which will be denoted by $x(t_{k})$). This leads to a length of closed season determined by
\begin{equation}
\label{sequence-2}
t_{k+1}-t_{k}=\tau(x(t_{k})),
\end{equation}
where $\tau(\cdot)$ is some function depending on the amount of the biomass captured at $t=t_{k}$. In consequence, the biomass harvested at 
time $t=t_{k}$ will determine the instant of the next capture. The idea is to define a length closure such that a bigger capture leads to a longest closure
and by assuming broad and realistic properties on per capita growth rate and product function, we will find sufficient conditions ensuring a sustainable production, {\it i.e.}, the existence of a globally stable periodic trajectory. 

This approach has been introduced in \cite{Cordova2}, where it is pointed out that
the resulting model is a new type of IDE equations, namely IDE with impulses dependent of time (IDE--IDT) and an introductory theory is presented.

\subsection{Outline}
In section 2, we construct a model of fishery with closed seasons and pulse captures, which is described by an IDE--IDT. In section 3, we study 
some basic properties of the resulting IDE--IDT. The main results concerning the sustainability of the resource
are  presented in section 4. A numerical example is presented in section 5.

\section{The model}
A classical mathematical model of a fishery with closed seasons has to describe the following bioeconomic 
processes: i) The growth of the resource (\emph{e.g.}, a single marine species). ii) The production function. 
iii) The length of the closures.

\subsection{Natural rate of growth}
%The mathematical modeling of the growth for a single species of fish has been made under several (and contradictory) approaches: on the one 
%hand, there is a perspective inspired in basic laws of single population dynamics, which leads to a scalar ordinary 
%differential (or difference) equation. On the other hand, the species can be viewed as a step of a complex ecologic network containig 
%competitors, predators and super--predators, leading to a system of (ordinary or partial) differential equations.   I

The growth of the biomass in the close season will be described by the 
ordinary differential equation:
\begin{equation}
\label{natural}
x' (t)  = x(t)\,r(x(t)),  \quad \textnormal{for any} \quad t\in (t_{k},t_{k+1}),
\end{equation}
where $\{t_{k}\}$ denotes an increasing 
sequence of harvest instants.

\noindent{\bf Growth hypotheses (G)} 
\begin{itemize}
\item[\textbf{(G1)}] {\bf Density--dependence.}  The \emph{per capita rate of growth}  $r\colon [0,K] \to [0,+\infty)$ 
is a derivable and strictly decreasing function of the biomass. In addition, we assume $r(0)=r_{0}>0$ and  $r(K)=0$. 
\item[\textbf{(G2)}] {\bf Bounded variation.} The rate  $r\colon [0,K] \to [0,+\infty)$ has lowerly bounded 
derivative, {\it i.e.}, there exists $\rho>0$, such that 
$-\rho \leq r'(x)<0$.
\end{itemize}

\begin{remark}
\label{commentaire-1}
The property of derivability stated in \textbf{(G1)} is a technical assumption. Nevertheless, in \cite{Tanner}, some statistical
results support the negative correlation between per capita rate of growth and biomass. Notice that, $0$ and $K$ are the unique 
equilibria of (\ref{natural}) and $[0,K]$ is a positively invariant set.
\end{remark}

\begin{remark}
\textbf{(G2)} implies that the function $x\mapsto x\,r(x)$ satisfies the local 
Lipschitz condition when $x \in [0,K]$. Hence, the solutions of (\ref{natural}) are 
unique to the right and depend continuously on initial condition to the right.
\end{remark}

An example of growth rates satisfying \textnormal{\textbf{(G)}} are given by the family:
\begin{equation}
\label{gen-ver}
r(x)=r_{0}\Big(1-\Big[\frac{x}{K}\Big]^{\theta}\Big)^{\beta}, \quad \textnormal{with} \quad \theta\geq 1 \quad \textnormal{and}
\quad \beta\geq 1,
\end{equation}
which generalizes the \emph{logistic Verhulst--Pearl per capita rate} (see \emph{e.g.}, \cite{Tsoularis}).

Another example is given by the function \cite{Smith63}:
\begin{equation}
\label{FSmith}
r(x)=r_{0}\frac{K-x}{K+cx}, \quad \textnormal{with} \quad  c>0, 
\end{equation}
which was used, for example, to describe the growth of \emph{Daphnia Magna}.

\subsection{Fishing mortality}
The \emph{fishing mortality} (see \cite[p.102]{Kolding}) is the fraction $F\in [0,1]$ of average population taken by fishing.
In order to estimate it, we emphasize that there are two possible scenarios: an \emph{open season} one, where the fishing is allowed all the time,
and a restricted process (\emph{closed season}), where the fishing is forbidden. 
 
The global fishing process is studied with two time scales: the first one describes the close season 
and only considers the growth of biomass summarized by (\ref{natural}) with assumptions \textbf{(G)}. The second scale concentrates the fishing 
mortality by considering the captures as pulses defined in a sequence of \emph{harvest instants} $\{t_{k}\}$, obtaining:
\begin{equation}
\label{cosecha}
F=\frac{x(t_{k})-x(t_{k}^{+})}{x(t_{k})}=\frac{H(x(t_{k}))}{x(t_{k})},
\end{equation}
where $x(t_{k}^{+})$ is the after $k$--th capture biomass and $H(\cdot)$ describes the \emph{capture function} in the biomass level 
at $t_{k}$. From (\ref{cosecha}), we can see that fishing mortality can be seen as a measure of \emph{catch per unit of biomass} (CPUB).

For convenience, let us define
the \emph{impulse action} $I\colon [0,K]\mapsto [0,K]$, as follows: 
\begin{equation}
\label{action}
I(x)=x-H(x)=(1-F)x.
\end{equation}

In order to relate fishing mortality with the input factors (capital and labor) deployed along the fishing 
process, the continuous modeling literature has introduced the concept of \emph{fishing effort}, which is a rate describing the number of 
boats, traps, hooks, technicians, fishermens, etc., per time (see \emph{e.g.}, \cite{Anderson}, 
\cite{Clark}, \cite{Kolding}). 

In an impulsive modeling framework, if the \emph{punctual fishing effort} is denoted by $E>0$, a bounded scalar measure, it is expected
to describe (\ref{cosecha}) by a functional relation $F=\phi(E,x(t_{k}))$, which, combined with (\ref{natural}) and (\ref{action}), allows to introduce the complementary 
evolution law: 
\begin{equation}
\label{cosecha-1}
x(t^{+})=I(x(t))=\big(1-\phi(E,x(t))\big)x(t), \quad \textnormal{with} \quad t=t_{k}.
\end{equation}

We point out that the practical estimation of $E$ and $\phi(\cdot,\cdot)$ are complicated matters in bioeconomic 
theory and we refer the reader to \cite{Anderson}, \cite{Clark}, \cite{Walters} for details.

\noindent{\bf Harvest hypotheses (H):}
\begin{itemize}
\item[\textbf{(H1)}] {\bf Impulsive action}. $I(\cdot)$ is a derivable and 
increasing function. 
\item[\textbf{(H2)}] {\bf Elasticity}. If $\Delta x \to 0$, then:  
\begin{equation}
\label{elasticity}
\frac{\Delta H/H}{\Delta x /x} \to \frac{H'(x)\, x}{H(x)}\geq 1,
\end{equation}
for any $x \in (0,K]$. This is, a percentage change in the resource biomass determines a percentage change 
bigger than or equal in the captured amount. Notice that, if the {\it yield elasticity respect to the biomass} is bigger or equal than one,
then $I(\cdot)$ is {\it inelastic} or {\it unitary}.
\end{itemize}

\begin{remark}
\label{commentaire-h1}
The property \textbf{(H1)} combined with (\ref{cosecha-1}) says that:
$$
x\frac{\partial \phi}{\partial x}(E,x) \leq 1-\phi(E,x) \quad \textnormal{for any} \quad x\in [0,K].
$$
\end{remark}

\begin{remark}
\label{equivalence-0}
The property \textbf{(H2)} says that a fixed punctual fishing effort is more productive at higher resource availability.
In addition, by using (\ref{action}), we can prove that \textbf{(H2)} is equivalent to
$$
\frac{I'(x)\, x}{I(x)}\leq 1, \quad \textnormal{and} \quad \frac{\partial \phi}{\partial x}(E,x) \geq 0, \quad \textnormal{for any} \quad x \in (0,K].
$$
\end{remark}

\begin{remark}
\label{C-D}
Notice that, the called Cobb--Douglas production function can be interpreted by a fishing mortality $\phi(E,x)=qE^{\alpha}x^{\beta-1}$ 
with $q>0$, $\alpha>0$ and $\beta>0$.  It is easy to see that hypotheses \textbf{(H)} are verified when $\beta\geq 1$. An important case
is the Schaefer assumption \cite{Schaefer}, corresponding to $\beta=1$ and $\alpha=1$, \emph{i.e.}, the parametrization is linear with respect to the fishing effort
and biomass.
\end{remark}

\subsection{Length of the closures}
%Let us consider a non structured renewable resource, which can be represented by a piecewise continuous 
%function $t\mapsto x(t)$ of its biomass. We will consider a deterministic bioeconomic process divided in two 
%seasons: a \emph{harvesting season}, where the capture of biomass is allowed and a  \emph{close season}, a period 
%undisturbed by the harvesting activities.

There exist a third evolution law governing the dynamics, which determine the 
sequence $\{t_{k}\}$ of harvest instants. It is the first order recurrence that follows:
\begin{equation}
\label{recurrencia}
\Delta \, t_{k}=t_{k+1}-t_{k}=\tau(I(x(t_{k}))),
\end{equation}
where $\tau:[0,K^{*}] \to [0,+\infty)$, with $K^{*}=I(K)$.

As it can be observed in (\ref{recurrencia}), the length of 
the next closure, namely $\Delta t_{k}$, is a function $\tau(\cdot)$ of the stock after the $k$--th harvest and allows 
to establish an automatic regulation of the dynamics by closed seasons. Here, we only introduce a theoretical proposal 
and we do not deal with the problems of implementation, for instance, those relating to the 
estimation of data requirred to define the length of the closed seasons.

%The conditions for function $\tau(\cdot)$ are: 

%\noindent{ \bf Close season hypotheses (C):}
%\begin{itemize}
%\item[\textbf{(C1)}] {\bf Growth type.}
%The function $\tau \colon [0,K^{*}] \mapsto [0,+\infty)$ is derivable and decreasing, such that $\tau(K^{*})=0$. 

%\item[\textbf{(C2)}] {\bf Initial condition}. The initial value $\tau_{0}=\tau(0)$ satisfies 
%                \begin{equation}
%                \label{taumax}
%                \tau_{0} \leq \frac{1}{\rho K +r_{0}}\ln\Big(1+\frac{1}{1-q_{E}}\Big(1+\frac{r_{0}}{\rho K}\Big)\Big).
%                \end{equation}
                
%\item[\textbf{(C3)}] {\bf Main condition}. The following inequality: 
%	         \begin{equation}
%                \label{taucota}
%		  |\tau(z_{2})-\tau(z_{1})| < \frac{1}{r_{0}}\left\{\ln \left(\frac{z_{2}}{z_{1}}\right)-m(z_{2}-z_{1})   \right\},
%                \end{equation}
%	         is satisfied for $0<z_{1}<z_{2}<I(K)=K^{*}$ and $m=\rho[e^{\alpha\tau_{0}}-1]/\alpha$, with $\alpha=\rho K+r_{0}$.
%\end{itemize}

%There exists several models of pulse harvesting of a renewable resource (not restricted uniquely to 
%fisheries), for example \cite{Bai}, \cite{Berezanski}, \cite{Braverman} and \cite{Zhang-0} consider a resource which growth is described by a 
%logistic growth, \cite{Zhao} considers a generalized logistic growth. Gompertz models (which, not satisfy \textbf{(G2)}) have been
%studied in \cite{Braverman}, \cite{Dong}. 

\subsection{The equation model}

The dynamics determined by the combination of equations (\ref{natural}), (\ref{cosecha}) and (\ref{recurrencia}) is formalized by the 
impulsive differential equation:
\begin{equation}
\label{impulsiva}
	\left\{ \begin{array}{llll}
	x' (t)      & = & x(t)r(x(t)),              & t\neq t_k, \\
	x\,(t^{+})   & = & I(x(t)),          & t=t_k,\\ 
	\Delta t_{k}  & = & \tau(I(x(t_k))), & k\geq 0,\\
	\end{array}
	\right.
\end{equation}
where $(t,x)\in [0,+\infty)\times[0,+\infty)$. 

This type of impulsive differential equation is denoted as Impulsive Differential Equations with Impusive Dependent Times (IDE--IDT), which
have been introduced by C\'ordova--Lepe in \cite{Cordova2} and its novelty with respect to classic impulsive differential equations
is that the sequence of impulse instants is determined by the process dynamics: the harvested stock $I(x(t_{k}))$  will determine 
the next harvest time $t_{k+1}$. Indeed, the sequence of harvest times is described by:
\begin{equation}
\label{modelo-ii}
t_{k+1}=t_{k}+\tau(I(x(t_{k}))),\\
\end{equation} 
where the biomass $x(t)$ is abruptly reduced to $x(t^{+})=x(t)-I(x(t))$ at $t=t_{k}$.

There exists several models of pulse harvesting of a renewable resource (not uniquely restricted to 
fisheries) described by impulsive equations, \emph{e.g.},: \cite{Bai}, \cite{Berezanski}, \cite{Braverman}, \cite{Cor-Pin} and \cite{Zhang-0} consider 
a resource with logistic growth rate, \cite{Zhao} considers a generalized logistic growth.
In addition, Gompertz models (which, not satisfy \textbf{(G2)}) have been studied in \cite{Braverman}, \cite{Dong}.  Nevertheless, 
these works consider a fixed time between two harvest processes, which is equivalent to
consider $\tau(\cdot)$ as a constant function.

We point out that impulsive models having sequences similar to (\ref{modelo-ii}) have also been 
introduced in \cite{Karafyllis} by Karafyllis in an hybrid control theory
setting and are named \emph{hybrid systems with sampling partition generated by the system}.

\section{The impulsive system (\ref{impulsiva})}
Given a first harvest time $t_{0}\in \mathbb{R}$ and a biomass level $x_{0}\in [0,K]$, then the existence, uniqueness 
and continuability of the solution of (\ref{impulsiva}), with initial condition $(t_{0},x_{0})$, can be 
deduced from \cite{Cordova2}. Indeed, we know any solution is a piecewise
continuous function having first kind discontinuities at the harvest instants $t=t_{k}$ ($k=0,1,2,\cdots$). In addition, we point out that 
different initial conditions will determine different sequences of harvest instants.

\subsection{Basic properties}
In the study of (\ref{impulsiva}), it will be necessary to consider the initial value ordinary associated problem:
\begin{equation}
\label{ode-0}
	\left\{ \begin{array}{l}
	z'(t)    \hspace{0.13cm}   =  z(t)r(z(t)),               \\
	z(\sigma_{0}) =  v, \quad \textnormal{with} \quad (\sigma_{0},v) \in \mathbb{R}\times [0,K].                 
	\end{array}
	\right.
\end{equation}

\begin{definition}
\label{faux1}
The unique solution of \textnormal{(\ref{ode-0})} will be denoted by $t\mapsto\varphi(t;\sigma_{0},v)$, for 
any $t\geq \sigma_{0}$, and $\varphi(\sigma_{0};\sigma_{0},v)=v$.
\end{definition}

Observe that given $v \in [0,K]$, the function $\varphi(\cdot;\sigma_{0},v) \colon [\sigma_{0},+\infty)\to [0,K]$ satisfies:
\begin{equation}
\label{prop-can}
\varphi(t;\sigma_{0},0)=0 \quad 
\textnormal{and}
\quad
\frac{\partial \varphi}{\partial v}(t;\sigma_{0},v)\geq 0.
\end{equation}

Let $x(\cdot)$ be the solution of (\ref{impulsiva}) with initial condition $(t_{0},x_{0})$, which determines
the sequence $\{(t_{k},x(t_{k}))\}_{k}$. Since (\ref{impulsiva}) is an ODE on $(t_{k},t_{k+1}]$, we can deduce that
$x(\cdot)$ coincides with the unique solution $\varphi(\cdot,t_{k},I(x(t_{k})))$ of (\ref{ode-0}) on $(t_{k},t_{k+1}]$. 

By using Definition \ref{faux1}, it follows that
$$
\varphi(\sigma+t_{k}; t_{k},I(x(t_{k})))=
I(x(t_{k}))\exp\Big(\int_{t_{k}}^{\sigma+t_{k}}r\big[\varphi(s;t_{k},I(x(t_{k})))\big]\,ds\Big),
$$
$$
\hspace{3.65cm}=I(x(t_{k}))\exp\Big(\int_{0}^{\sigma}r\big[\varphi(s+t_{k};t_{k},I(x(t_{k})))\big]\,ds\Big).
$$

Finally, uniqueness of solutions implies $\varphi(s+t_{k};t_{k},I(x(t_{k})))=\varphi(s;0,I(x(t_{k})))$, which leads to: 
\begin{equation}
\label{pitaron-ode}
\varphi(\sigma+t_{k}; t_{k},I(x(t_{k})))=
I(x(t_{k}))\exp\Big(\int_{0}^{\sigma}r\big[\varphi(s;0,I(x(t_{k})))\big]\,ds\Big),
\end{equation}
for any $\sigma \in [0,\tau(I(x(t_{k}))]$.

By using (\ref{modelo-ii}), it follows that at $\sigma=\tau(I(x(t_{k})))$ (\emph{i.e.}, at $t=t_{k+1}$), the solutions of
(\ref{impulsiva}) satisfy the one dimensional map:
\begin{equation}
\label{pitaron-ode1}
x(t_{k+1})=f(x(t_{k})),
\end{equation}
where the function $f\colon [0,K]\to [0,K]$ is described as follows:
\begin{equation}
\label{recurrencia-0}
f(x)=F(I(x)), \quad \textnormal{with} \quad F(y)=y \, \exp \Big( \int_0^{\tau(y)}r[\varphi(s;0,y)]ds  \Big).
\end{equation}

Notice that Eq.(\ref{cosecha-1}) implies $f(0)=0$. This fact will have important consequences when studying the asymptotic behavior of
(\ref{impulsiva}).

\subsection{Special solutions of (\ref{impulsiva}) and bio-economic interpretation}
Let us introduce the straightforward result:
\begin{lemma}
\label{pitaron-system}
Any positive fixed point $u^{*}\in (0,K]$ of the map \textnormal{(\ref{pitaron-ode1})} defines a piecewise--continuous 
$\tau(I(u^{*}))$--periodic 
solution $t\mapsto u^{*}(t)$ of \textnormal{(\ref{impulsiva})}, namely, the $u^{*}$--associated solution. In addition, the fixed point $u^{*}=0$ of the 
map \textnormal{(\ref{pitaron-ode1})} defines a constant null solution of  \textnormal{(\ref{impulsiva})}.
\end{lemma}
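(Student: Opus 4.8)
The plan is to treat the two fixed points separately, constructing an explicit candidate solution in each case and checking directly that it satisfies the three evolution laws of (\ref{impulsiva}). For the trivial fixed point $u^{*}=0$, I would simply observe that (\ref{recurrencia-0}) together with $I(0)=0$ yields $f(0)=0$, so $x\equiv 0$ is indeed a fixed point; the constant function $x(t)\equiv 0$ then solves $x'=x\,r(x)$ (both sides vanish), is preserved by the impulse since $I(0)=0$, and the value $\tau(I(0))$ merely fixes an arbitrary partition of the time axis along which nothing happens. This is the claimed null solution.

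For a positive fixed point $u^{*}\in(0,K]$, I would set $T:=\tau(I(u^{*}))$, fix an arbitrary first harvest instant $t_{0}$, and \emph{define} the harvest sequence by $t_{k}:=t_{0}+kT$. On each interval $(t_{k},t_{k+1}]$ I put $u^{*}(t):=\varphi(t-t_{k};0,I(u^{*}))$, which by the translation identity established just before (\ref{pitaron-ode}) equals $\varphi(t;t_{k},I(u^{*}))$ and therefore solves (\ref{natural}) with the post-impulse value $I(u^{*})$ as initial datum, jumping from $u^{*}$ down to $I(u^{*})$ at each $t_{k}$. The decisive step is an induction on $k$: assuming the pre-impulse value $u^{*}(t_{k})=u^{*}$, evaluating (\ref{pitaron-ode}) at $\sigma=T$ and invoking (\ref{pitaron-ode1}) gives $u^{*}(t_{k+1})=f(u^{*})=u^{*}$, so the hypothesis propagates; consequently the spacing $t_{k+1}-t_{k}=\tau(I(u^{*}(t_{k})))=\tau(I(u^{*}))=T$ is consistent with the third line of (\ref{impulsiva}). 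This verifies that the three laws hold simultaneously and that the constructed function is a genuine, piecewise-continuous solution with first-kind discontinuities at the $t_{k}$.

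Periodicity is then immediate from the autonomy of (\ref{natural}): every arc is the same time-translate $\varphi(\,\cdot\,;0,I(u^{*}))$ of length $T$, and the jump at each $t_{k}$ always sends $u^{*}$ to $I(u^{*})$, so $u^{*}(t+T)=u^{*}(t)$ for all $t$, with period exactly $T=\tau(I(u^{*}))$. I do not expect a genuine obstacle, since this is in essence the standard equivalence between a fixed point of the stroboscopic map and a periodic orbit. The only point requiring care is confirming that the impulse instants produced by the state-dependent recurrence (\ref{recurrencia}) are in fact uniformly spaced; this is precisely what the fixed-point relation $f(u^{*})=u^{*}$ guarantees, collapsing the variable timing into the constant $T$ and reducing the situation to the classical fixed-spacing case.
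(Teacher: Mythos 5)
Your proposal is correct: the paper states this lemma without proof (introducing it as a ``straightforward result''), and your construction --- defining $t_{k}=t_{0}+kT$ with $T=\tau(I(u^{*}))$, building each arc from the translated flow $\varphi(\,\cdot\,;0,I(u^{*}))$, and using the fixed-point relation $f(u^{*})=u^{*}$ inductively to show the state-dependent spacing collapses to the constant $T$ --- is exactly the standard stroboscopic-map argument the authors have in mind. The only cosmetic slip is calling the partition for the null solution ``arbitrary'': the impulse instants are in fact spaced by $\tau(0)=\tau_{0}$, though this is immaterial since the solution is identically zero.
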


The existence of a $\tau(I(u^{*}))$--periodic solution can be interpreted as a fishery strategy with harvest instants uniformly 
distributed in time.  There exists different stability definitions for these solutions (see \emph{e.g.}, \cite{Cordova2} and \cite{Karafyllis}). In this context,
we will follow the ideas stated in \cite{Cordova-Dv-R}:

\begin{definition}
\label{fplas}
The $u^{*}$--associated solution of \textnormal{(\ref{pitaron-ode1})} is locally asymptotically stable if there exists $\delta>0$ such that 
for any solution $t\mapsto x(t)$ of \textnormal{(\ref{impulsiva})} 
with initial condition $x(0)$ satisfying $|x(0)-u^{*}|<\delta$, it follows that:
$$
\lim\limits_{k\to +\infty}|x(t_{k})-u^{*}|=0,
$$
where $\{t_{k}\}$ is the corresponding sequence of harvest instants associated to $x(0)$.
\end{definition}

\begin{definition}
\label{fpgas}
The $u^{*}$--associated solution of \textnormal{(\ref{pitaron-ode1})} is globally asymptotically stable if for any solution 
$t\mapsto x(t)$ of \textnormal{(\ref{impulsiva})} 
with initial condition $x(0)>0$, it follows that:
$$
\lim\limits_{k\to +\infty}|x(t_{k})-u^{*}|=0.
$$
\end{definition}

Observe that the asymptotic stability of a $\tau(I(u^{*}))$--periodic solution implies the ecological sustainability of the fishery. On the other
hand, the asymptotic stability of the null solution implies the future resource extinction.

\section{Sustainability conditions}
\subsection{General result}

\begin{theorem}
\label{mainresult}
Let us assume that \textnormal{\textbf{(G)},\textbf{(H)}} and the closed season hypotheses:
\begin{itemize}
\item[\textbf{(C1)}] {\bf Growth type.}
The function $\tau \colon [0,K^{*}] \to [0,+\infty)$ is derivable and decreasing, such that $\tau(K^{*})=0$. 

\item[\textbf{(C2)}] {\bf Initial condition}. The initial value $\tau_{0}=\tau(0)$ satisfies 
                \begin{equation}
                \label{taumax}
                \tau_{0} \leq \frac{1}{\rho K +r_{0}}\ln\Big(1+\frac{1}{1-\phi(E,0)}\Big(1+\frac{r_{0}}{\rho K}\Big)\Big).
                \end{equation}
                
\item[\textbf{(C3)}] {\bf Main condition}. The following inequality: 
	         \begin{equation}
                \label{taucota}
		  |\tau(z_{2})-\tau(z_{1})| < \frac{1}{r_{0}}\left\{\ln \left(\frac{z_{2}}{z_{1}}\right)-m(z_{2}-z_{1})   \right\},
                \end{equation}
	         is verified for $0<z_{1}<z_{2}<I(K)=K^{*}$ and $m=\rho[e^{\alpha\tau_{0}}-1]/\alpha$, with $\alpha=\rho K+r_{0}$.
\end{itemize}
are satisfied.

\vspace{0.01cm}
Then:
\begin{itemize}
\item[i)] If $e^{\tau_{0} r_{0}}(1-\phi(E,0)) < 1$, then the resource--free solution of \textnormal{(\ref{impulsiva})} is globally 
asymptotically stable \textnormal{(}extinction case\textnormal{)}. 
\item[ii)] If $e^{\tau_{0} r_{0}}(1-\phi(E,0)) > 1$, then there exists a unique initial condition $x^{*}=f(x^{*}) \in (0,K)$ 
--with $f(\cdot)$ given by \textnormal{(\ref{pitaron-ode1})}-- defining 
a $\tau(I(x^{*}))$--periodic globally asymptotically stable trajectory \textnormal{(}sustainable case\textnormal{)}.  
\end{itemize}
\end{theorem}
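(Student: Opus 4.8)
The plan is to reduce the whole impulsive dynamics to the scalar one--dimensional map $f$ of (\ref{pitaron-ode1})--(\ref{recurrencia-0}) and to analyze its global behaviour on $[0,K]$. By Lemma \ref{pitaron-system} the null solution and the periodic solution correspond, respectively, to the fixed points $0$ and $x^{*}$ of $f$, and by Definitions \ref{fplas}--\ref{fpgas} their asymptotic stability is exactly the convergence of the iterates $x(t_{k+1})=f(x(t_{k}))$ to those fixed points. Writing $F(y)=y\,e^{G(y)}$ with $G(y):=\int_{0}^{\tau(y)} r[\varphi(s;0,y)]\,ds$, a short computation gives $f'(0)=F'(0)\,I'(0)=e^{r_{0}\tau_{0}}\,(1-\phi(E,0))$, so the threshold quantity in the statement is nothing but $f'(0)$; whether $f'(0)<1$ or $f'(0)>1$ is precisely the local (in)stability of the trivial fixed point of $f$. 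Thus the theorem becomes a statement about the global dynamics of a monotone scalar map with $f(0)=0$.

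First I would establish two qualitative properties of $f$. \textbf{(i)} $f$ is strictly increasing. Since $I$ is increasing by \textbf{(H1)}, it suffices to show $F$ is increasing, i.e. $\ln(z_{2}/z_{1})>G(z_{1})-G(z_{2})$ for $0<z_{1}<z_{2}\le K^{*}$. Because the vector field $x\mapsto x\,r(x)$ has global Lipschitz constant $\alpha=\rho K+r_{0}$ on $[0,K]$ (from \textbf{(G2)} and $r\le r_{0}$), Gronwall gives $|\varphi(s;0,z_{2})-\varphi(s;0,z_{1})|\le e^{\alpha s}(z_{2}-z_{1})$; combining this with $|r'|\le\rho$ and $\tau'\le 0$ (so $\tau(z_{2})\le\tau(z_{1})\le\tau_{0}$) yields $G(z_{1})-G(z_{2})\le m\,(z_{2}-z_{1})+r_{0}\,|\tau(z_{2})-\tau(z_{1})|$, and \textbf{(C3)} closes the inequality. \textbf{(ii)} The map $x\mapsto f(x)/x$ is strictly decreasing on $(0,K)$. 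Taking logarithms, $\frac{d}{dx}\ln\frac{f(x)}{x}=\frac{1}{x}\Big(\frac{I'(x)x}{I(x)}-1\Big)+G'(I(x))\,I'(x)$; the bracket is $\le 0$ by \textbf{(H2)} (Remark \ref{equivalence-0}), while $G'<0$ on $[0,K^{*})$ because $\tau'\le 0$ by \textbf{(C1)} and $r'<0$ strictly by \textbf{(G1)}. Here \textbf{(C2)} enters to guarantee $m\,K^{*}\le 1$, which makes the right--hand side of \textbf{(C3)} positive on $(0,K^{*})$ and hence the main condition consistent and non--vacuous.

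With \textbf{(i)}--\textbf{(ii)} in hand the conclusion follows from elementary monotone--map dynamics. If $f'(0)=e^{r_{0}\tau_{0}}(1-\phi(E,0))<1$, then by \textbf{(ii)} we have $f(x)/x<f'(0)<1$, so $f(x)<x$ on $(0,K]$; the iterates decrease monotonically to the unique fixed point $0$, giving global stability of the resource--free solution (extinction). If $f'(0)>1$, then $f(x)/x$ decreases continuously from a value exceeding $1$ near $x=0$ down to $f(K)/K=K^{*}/K\le 1$, so it equals $1$ at a unique $x^{*}\in(0,K)$, with $f(x)>x$ on $(0,x^{*})$ and $f(x)<x$ on $(x^{*},K)$; since $f$ is increasing, every orbit starting at $x(0)>0$ is monotone and bounded, hence converges to $x^{*}$, yielding global stability of the $\tau(I(x^{*}))$--periodic solution (sustainability).

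The main obstacle is property \textbf{(i)}, the monotonicity of $f$: it demands uniform control of the sensitivity $\partial\varphi/\partial y$ of the growth flow with respect to its initial datum, integrated over a window $[0,\tau(y)]$ whose length itself varies with $y$ through $\tau(\cdot)$. Isolating the two competing effects --- the spreading of trajectories (governed by $\alpha$ and $\rho$, producing the constant $m$) and the shortening of the closure (producing the term $r_{0}|\tau(z_{2})-\tau(z_{1})|$) --- and showing that \textbf{(C3)} exactly dominates their sum is the technical heart of the argument. By contrast, property \textbf{(ii)}, the threshold computation, and the monotone--convergence conclusion are routine once the estimates of \textbf{(i)} are in place.
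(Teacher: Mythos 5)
Your proposal is correct and takes essentially the same approach as the paper: reduction to the one--dimensional map $f=F\circ I$ of (\ref{pitaron-ode1}), monotonicity of $f$ obtained from the Gronwall bound with constant $\alpha=\rho K+r_{0}$ (producing the constant $m$) combined with \textbf{(C3)}, the decrease of $f(x)/x$ (the paper's elasticity property $0\leq xf'(x)/f(x)<1$) obtained from \textbf{(H2)} and the negativity of $W'$, and the threshold $f'(0)=e^{r_{0}\tau_{0}}(1-\phi(E,0))$ fed into a monotone--map dichotomy. The only differences are presentational: you carry out the monotonicity estimate in finite--difference form rather than via the differential inequality (\ref{tauequation}) claimed equivalent to \textbf{(C3)}, and you prove the convergence dichotomy inline instead of invoking the paper's Appendix Lemma \ref{discr3}.
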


%\begin{remark}
%If we assume that $E\mapsto \phi(E,0)$ is strictly increasing with respect to $E$, the conditions ensuring extinction 
%or sustainability (in terms of punctual fishing effort) are respectively: 
%$$
%E > \phi^{-1}(1-e^{-\tau_{0}\,r_{0}},0)
%\quad \textnormal{and}  \quad
%E < \phi^{-1}(1-e^{-\tau_{0}\,r_{0}},0).
%$$
%\end{remark}

\begin{remark}
\label{TAU}
Notice that \textbf{(C3)} gives us a range of graphic possibilities for 
the choice of function $\tau(\cdot)$. Moreover, (\ref{taumax}) implies that the right side of (\ref{taucota}) is non 
negative for any $z \in [0,K^{*}]$. 
\end{remark}

\begin{remark}
If \textbf{(C2)} is verified, we can see that $e^{\tau_{0}\,r_{0}}(1-\phi(E,0) ) > 1$ implies
$$
\frac{1}{r_{0}}|\ln(1-\phi(E,0))|<\tau_{0}\leq  \frac{1}{\rho K +r_{0}}\ln\Big(1+\frac{1}{1-\phi(E,0)}\Big(1+\frac{r_{0}}{\rho K}\Big)\Big).
$$
In consequence, the left inequality says that there exists a trade--off between the fishing effort $E$ and the maximal lenght of a closure $\tau_{0}$ ensuring the fishery sustaintability. In addition, observe that the inequality stated above has sense only when $\phi(E,0)\in [0,\phi^{*})\subset [0,1]$, which add a complementary constraint for the fishing effort. 
\end{remark}

\begin{proof}
The asymptotic behavior of (\ref{impulsiva}) is determined by the map (\ref{pitaron-ode1}). 

Now, we will verify that $f\colon [0,K] \to \mathbb{R}$ satisfies the following properties.
\begin{itemize}
 \item[a)] The map $f$ is derivable and $f(0)=0$,
 \item[b)] The map $f$ is increasing and $f(K)<K$,
 \item[c)] For any $x\in (0,K]$ it follows that:
$$
0\leq \frac{xf'(x)}{f(x)}<1.
$$
\end{itemize}

Indeed, a) is straightforward consequence from (\ref{cosecha-1}) combined with $f'(x)=F'(I(x))\,I'(x)$, where $F'(y)$
is defined by:
\begin{displaymath}
\exp\Big(\int_{0}^{\tau(y)}r[\varphi(s;0,y)]\,ds\Big)
\Big\{1+y \, \Big(r[\varphi(\tau(y);0,y)] \, \tau'(y) + \int_{0}^{\tau(y)} \frac{\partial r}{\partial y}[\varphi(s;0,y)]\,ds   \Big)\Big\}.
\end{displaymath}

Let us verify that $f'(0)=e^{\tau_{0}r_{0}}I'(0)$ is consequence from \textbf{(G1)},\textbf{(H1)},\textbf{(C2)} and (\ref{prop-can}). Now, by 
using \textbf{(H1)}, we observe that b) follows if $F'(y)>0$ for any $y\in [0,K]$. When dropping the 
exponential factor of $F'(\cdot)$, we only have to prove that
\begin{equation}
\label{toprove}
1+y \, \Big(r[\varphi(\tau(y);0,y)] \, \tau'(y) + \int_{0}^{\tau(y)} \frac{\partial r}{\partial y}[\varphi(s;0,y)]\,ds   \Big) >0
\end{equation}
for any $y \in [0,K]$.

By hypotheses \textbf{(G)} and \textbf{(C1)}, combined with $\frac{\partial \varphi}{\partial y}(s;0,y)>0$ for 
any $s \in [0,\tau(y)]$, we can observe that inequality (\ref{toprove}) can be deduced from:
\begin{equation}
\label{toprove2}
1>y \, \Big(r_{0} \, |\tau'(y)| + \rho \,\int_{0}^{\tau(y)} \frac{\partial \varphi}{\partial y}(s;0,y)\,ds   \Big).
\end{equation}

By integral representation of $\varphi(s;0,y+h)$ and $\varphi(s;0,y)$, with $s \in [0, \tau(y)]$ and $h>0$ sufficientl small, the 
Gronwall inequality implies that
$$
\left|\varphi(s;0,y+h) - \varphi(s;0,y)\right| \leq |h|\,e^{(\rho\,K+r_{0})s},
$$
for any $s \in [0,\tau(y)]$.  Indeed, $\frac{\partial \varphi}{\partial y}(s;0,y) \leq e^{(\rho K +r_{0})s}$, with $s \in [0,\tau(y)]$. Therefore, 
we can reduce our proof to demand the condition that follows:
\begin{equation}
\label{tauequation}
y \left( \frac{\rho}{\alpha}\left[ e^{\alpha \tau(y)}-1   \right]-r_{0}\, \tau'(y) \right)<1, \quad \textnormal{where} \quad \alpha=\rho K+r_{0}.
\end{equation}

We point out that \textbf{(C3)} is equivalent to (\ref{tauequation}). Indeed, when replacing $z_{1}$ and $z_{2}$ by $y$ and $y+h$ ($h>0$) respectively,
(\ref{tauequation}) is obtained by letting $h\to 0$. Inversely, since $\tau(y) < \tau_{0}=\tau(0)$, the inequality (\ref{taucota}) is obtained by
integrating (\ref{tauequation}) on $[0,K^{*}]$, with $K^{*}=I(K)$.

To prove that the right side of (\ref{taucota}) is greater than zero, we 
observe that
$$
\inf\Big\{ \frac{\ln(K^{*})-\ln(z)}{K^{*}-z} \colon \,\,z \in (0,K^{*}]\Big\}=\frac{1}{K^{*}},
$$ 
and $1/K^{*}>m$ is  
equivalent to \textbf{(C2)}. Finally, observe 
that \textbf{(H)} and \textbf{(C1)} imply $f(K)=I(K)<K$ and property b) follows.

The property c) is equivalent to $x^{2}(f(x)/x)'=F'(I(x))I'(x)x-F(I(x))<0$ for any $x\in (0,K]$. This is verified if and only if for 
any $x \in (0,K]$, it follows that:
\begin{equation}
\label{proelasticidad}
x\, I'(x) \,\left[1+I(x) W'(I(x)) \right] \exp(W(I(x))) \leq I(x) \, \exp(W(I(x)))
\end{equation}
where $W(I(x))$ is defined by
$$
W(I(x))=\int_{[0,\tau(I(x))]}r[\varphi(s;0,I(x))]\, ds,  \quad  \textnormal{for} \quad x \in [0,K]. 
$$

Let us recall that 
$$
W'(y)= r[\varphi(\tau(y);0,y)] \, \tau'(y) + \int_{0}^{\tau(y)} \frac{\partial r}{\partial y}[\varphi(s;0,y)]\,ds   <0, \quad
\textnormal{for} \quad y\in (0,K^{*}].
$$

By using this inequality, combined with (\ref{toprove}) and Remark \ref{equivalence-0}, we can deduce that:
\begin{equation}
\label{preelaticidad}
\frac{I'(x)\,x}{I(x)} \leq 1< \frac{1}{1-I(x)|W'(I(x))|}, \quad \textnormal{with} \quad x \in (0,K].
\end{equation}

By applying Lemma \ref{discr3} (see Appendix), there are two posibilities for system (\ref{pitaron-ode1}) according the sign of 
$$
1-f'(0)=1-e^{\tau_{0}r_{0}}I'(0).
$$
By Eq.(\ref{cosecha-1}), we can express the threshold condition for case (a) $f'(0) < 1$ as $e^{\tau_{0}\,r_{0}}(1-\phi(E,0)) < 1$ and 
for case (b) $f'(0)>1$ as $e^{\tau_{0}\,r_{0}}(1-\phi(E,0) ) > 1$. So that, the result follows.
\end{proof}

\subsection{Application to logistic growth}
Notice that, in some cases, the one--di\-men\-sional map (\ref{pitaron-ode1}) associated to the system (\ref{impulsiva})
can be defined explicitly and the previous result improved. Indeed, let us consider a marine species with logistic growth, whose exploitation is described by: 
\begin{equation}
\label{dugma}
	\left\{ \begin{array}{llll}
	x' (t)      & = & rx(t)\Big(1-\frac{\displaystyle x(t)}{\displaystyle K}\Big),              & t\neq t_k, \\
	x\,(t^{+})   & = & I(x(t)),          & t=t_k,\\
	\Delta t_{k}  & = & \tau(I(x(t_k))), & k\geq 0.\\
	\end{array}
	\right.
\end{equation}

\begin{corollary}
\label{verhuls-pearl}
Let us assume that the impulse action $I(\cdot)$ satisfies \textnormal{\textbf{(H)}} and the close season satisfies \textnormal{\textbf{(C1)}}
and 
\begin{itemize}
\item[\textbf{(C3')}] {\bf Closure condition}. The following inequality: 
	         \begin{displaymath}
		  |\tau(z_{2})-\tau(z_{1})| < \frac{1}{r_{0}}\left\{\ln \left(\frac{z_{2}}{z_{1}}\right)+\ln\Big(\frac{K-z_{1}}{K-z_{2}}\Big)   \right\},
                \end{displaymath}
	         is verified for $0<z_{1}<z_{2}<I(K)=K^{*}$.
\end{itemize}
Then:
\begin{itemize}
\item[i)] If $e^{r\tau_{0}}(1-\phi(E,0)) < 1$, then the resource--free solution of \textnormal{(\ref{dugma})} is globally 
asymptotically stable \textnormal{(}extinction case\textnormal{)}. 
\item[ii)] If $e^{r\tau_{0}}(1-\phi(E,0)) > 1$, then there exists a unique initial condition $x^{*}=f(x^{*}) \in (0,K)$ defining 
a $\tau(I(x^{*}))$--periodic globally asymptotically stable trajectory \textnormal{(}sustainable case\textnormal{)}.  
\end{itemize}
\end{corollary}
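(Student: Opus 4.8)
The plan is to mirror the proof of Theorem \ref{mainresult}, replacing the Gronwall over-estimate of $\partial\varphi/\partial y$ used there by the \emph{exact} solution of the logistic equation, which makes the map (\ref{pitaron-ode1}) completely explicit and replaces the term $m(z_{2}-z_{1})$ of \textbf{(C3)} by a sharper closed expression. First I would note that the per capita rate $r(x)=r(1-x/K)$ satisfies \textbf{(G)} with $r_{0}=r$ and $\rho=r/K$ (the derivative being constant, this value of $\rho$ is sharp). Solving (\ref{ode-0}) explicitly gives
$$\varphi(s;0,y)=\frac{Kye^{rs}}{K+y(e^{rs}-1)},$$
and, since the logistic flow satisfies $\partial_{s}\varphi/\varphi=r(1-\varphi/K)=r[\varphi]$, the integral defining $F$ in (\ref{recurrencia-0}) telescopes to $\int_{0}^{\tau(y)}r[\varphi(s;0,y)]\,ds=\ln(\varphi(\tau(y);0,y)/y)$. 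Hence $F(y)=\varphi(\tau(y);0,y)$ and the map becomes the explicit expression $f(x)=F(I(x))$ with $F(y)=Kye^{r\tau(y)}/(K+y(e^{r\tau(y)}-1))$.

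With this formula I would verify the three properties a), b), c) established in the proof of Theorem \ref{mainresult}, now by direct computation. Property a) is immediate, since $F(0)=0$ and $I(0)=0$. For b) I would compute $F'$ by the chain rule applied to $F(y)=\varphi(\tau(y);0,y)$, using $\partial_{s}\varphi(\tau(y);0,y)=rKye^{r\tau(y)}(K-y)/(K+y(e^{r\tau(y)}-1))^{2}$ and $\partial_{y}\varphi(\tau(y);0,y)=K^{2}e^{r\tau(y)}/(K+y(e^{r\tau(y)}-1))^{2}$, obtaining
$$F'(y)=\frac{Ke^{r\tau(y)}}{\big(K+y(e^{r\tau(y)}-1)\big)^{2}}\Big[K+ry(K-y)\tau'(y)\Big].$$
Since $\tau'\le 0$ by \textbf{(C1)}, this is positive if and only if $|\tau'(y)|<\frac{1}{r}\big(\frac{1}{y}+\frac{1}{K-y}\big)$, and \textbf{(C3')} is exactly the integrated form of this pointwise inequality: letting $z_{2}\to z_{1}$ recovers the differential bound, while integrating over $[z_{1},z_{2}]$ and using that $\tau$ is decreasing recovers \textbf{(C3')}, because $\int_{z_{1}}^{z_{2}}(\frac{1}{z}+\frac{1}{K-z})\,dz=\ln(z_{2}/z_{1})+\ln((K-z_{1})/(K-z_{2}))$. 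Finally $f(K)=F(K^{*})=\varphi(0;0,K^{*})=K^{*}=I(K)<K$ because $\tau(K^{*})=0$, so b) holds.

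For property c) I would factor the elasticity of $f$ as $\frac{xf'(x)}{f(x)}=\frac{xI'(x)}{I(x)}\cdot\frac{yF'(y)}{F(y)}$ with $y=I(x)$, bound the first factor by $1$ via \textbf{(H2)} (Remark \ref{equivalence-0}), and check from the explicit $F$ that $\frac{yF'(y)}{F(y)}=\frac{K+ry(K-y)\tau'(y)}{K+y(e^{r\tau(y)}-1)}<1$, which follows from $\tau'\le 0$ and $e^{r\tau(y)}-1\ge 0$ (strictly on $(0,K^{*})$), exactly the content of the estimate $1+yW'(y)<1$ in the general proof. With a), b), c) in hand, the appendix lemma (Lemma \ref{discr3}) applies verbatim, and by (\ref{cosecha-1}) the threshold is $f'(0)=e^{r\tau_{0}}I'(0)=e^{r\tau_{0}}(1-\phi(E,0))$, splitting into the extinction case i) when this quantity is $<1$ and the sustainable case ii) when it is $>1$. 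I would close by pointing out where the improvement over Theorem \ref{mainresult} originates: the general bound replaces $\partial_{y}\varphi$ by the Gronwall estimate $e^{\alpha s}$, forcing the auxiliary hypothesis \textbf{(C2)} to keep the right-hand side of \textbf{(C3)} nonnegative, whereas here $\partial_{y}\varphi$ is known exactly, the right-hand side of \textbf{(C3')} is manifestly positive for $0<z_{1}<z_{2}<K$, and \textbf{(C2)} becomes superfluous. The only genuinely delicate step is the equivalence between \textbf{(C3')} and the positivity of $F'$ — i.e. differentiating the explicit $F$ correctly and recognizing the partial-fraction integral; the rest is a transcription of the general argument with $r_{0}$ replaced by $r$.
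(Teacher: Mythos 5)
Your proposal is correct and follows essentially the same route as the paper's own proof: reduce to the explicit logistic map $f(x)=F(I(x))$ with $F(u)=Ku/(u+[K-u]e^{-r\tau(u)})$, verify properties a), b), c) via the factorization $\frac{xf'(x)}{f(x)}=\frac{xI'(x)}{I(x)}\cdot\frac{I(x)F'(I(x))}{F(I(x))}$ together with the equivalence of \textbf{(C3')} with the pointwise bound $K+ru(K-u)\tau'(u)>0$, and conclude by Lemma \ref{discr3} with threshold $f'(0)=e^{r\tau_{0}}(1-\phi(E,0))$. The only difference is one of detail: you carry out explicitly the computations (the formula for $F'$, the partial-fraction integration, the reason \textbf{(C2)} becomes superfluous) that the paper compresses into ``a simple computation shows'' and its Remark \ref{r-sof}.
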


\begin{proof}
A simple computation shows that (\ref{pitaron-ode1}) is equivalent to the one--dimensional map:
\begin{equation}
\label{logistico}
x(t_{k+1})=f(x(t_{k}))=\frac{KI(x(t_{k}))}{I(x(t_{k}))+[K-I(x(t_{k}))]e^{-r\tau(I(x(t_{k})))}}.
\end{equation}

We will verify that the map (\ref{logistico}) satisfies the assumptions of Lemma \ref{discr3} (see Appendix). Firstly, notice that 
$f(\cdot)$ can be described as follows: 
$$
f(x)=F(I(x)), \quad \textnormal{with} \quad F(u)=\frac{Ku}{u+[K-u]e^{-r\tau(u)}},
$$ 
and by using \textbf{(H)}, it follows that $f(\cdot)$ is derivable and $f(0)=0$. 

Secondly, observe that $f'(x)=F'(I(x))I'(x)$. As in the proof
of Theorem \ref{mainresult}, it follows that $F'(u)>0$ (with $0\leq u \leq I(K)=K^{*}$) if and only if \textbf{(C3')} is verified. By using this
fact, combined with \textbf{(H1)}, it follows that $f(\cdot)$ is increasing and $f(K)=F(I(K))<K$.

Finally, from \textbf{(H2)} combined with Remark \ref{equivalence-0}, it follows that
$$
0 \leq \frac{xf'(x)}{f(x)}=\frac{xI'(x)F'(I(x))}{F(I(x))}=\frac{xI'(x)}{I(x)}\frac{F'(I(x))I(x)}{F(I(x))}\leq \frac{F'(I(x))I(x)}{F(I(x))}.
$$

By using \textbf{(C1)}, it is not difficult to show that $uF'(u)<F(u)$ for any $u \in (0,I(K)]$. This fact, combined with the last above inequality
implies that $0\leq xf'(x)/f(x)<1$ for any $x\in (0,K]$. Now, as $f'(0)=e^{r\tau_{0}}(1-\phi(E,0))$, the result is a direct consequence from Lemma \ref{discr3}.
\end{proof}

\begin{remark}$ $
\label{r-sof}
\begin{itemize}
\item[i)] The assumption \textbf{(C3')} is equivalent to the differential inequality $K>-ru(K-u)\tau'(u)$ for any $0\leq u \leq I(K)$, which
furnishes a way to design admissible functions $\tau(\cdot)$ describing the lengh of open seasons.
\item[ii)] In addition, it is easy to verify that in this case there are no explicit restrictions for $\tau_{0}$ as stated in \textbf{(C2)} (see also Remark \ref{TAU}). 
\end{itemize}
\end{remark}

\section{Example}
Let us consider a fishery with: biomass growth described by the logistic equation, fishing mortality satisfying Schaefer 
assumption, \emph{i.e.}, $\phi(E,x)=qE$ and closures having lengths determined by the linear function $\tau\colon [0,K^{*}]\to [0,+\infty)$:
\begin{equation}
\label{tauexample}
\tau(z)=a(K^{*}-z), \quad \textnormal{with} \quad a>0.
\end{equation}

By using remarks \ref{commentaire-1} and \ref{C-D}, it follows that hypotheses \textbf{(G)} and \textbf{(H)} are
satisfied. In addition, observe that assumption \textbf{(C1)} is satisfied since $\tau(\cdot)$ is 
strictly decreasing and $\tau(K^{*})=0$. Moreover, notice that: 
$$
-ru(K-u)\tau'(u)= aru(K-u) \leq ar K^{2}/4,
$$
and by using statement i) from Remark \ref{r-sof}, it follows that \textbf{(C3')} is satisfied if $a<4/(rK)$.

On the other hand, as $K^{*}=(1-qE)K$ and $\tau_{0}=\tau(0)=a(1-qE)K$, we obtain the threshold:
\begin{equation}
\label{umbral}
\mathcal{E}(qE)=e^{r\tau_{0}}(1-\phi(E,0))=e^{arK(1-qE)}(1-qE),
\end{equation}
and by Corollary \ref{verhuls-pearl}, it follows that the resource--free solution of (\ref{dugma}) is globally asymptotically stable 
if $\mathcal{E}(qE)<1$. Similarly, there exists a globally asymptotically stable periodic solution if $\mathcal{E}(qE)>1$.

\vspace{5mm}
In order to illustrate some properties of the set of punctual fishing efforts $(0,E^{*})$ ($\mathcal{E}(qE^{*})=1$) ensuring sustainability, let us represent the
slope of (\ref{tauexample}) as follows:
$$
a=\frac{4}{rK}\eta, \quad \textnormal{with}  \quad 0<\eta<1.
$$

Notice that, to find $E^{*}$ is equivalent 
to find the unique fixed point $w^{*}=1-qE^{*}$ of the map $w\mapsto e^{-4\eta w}$. In this case, $E^{*}$ is dependent of the parameter 
$\eta \in (0,1)$, which is positively related to $\tau_{0}$.

It is straightforward to verify that the function $\eta \mapsto E^{*}$ is increasing and concave. This implies that lower values of maximal closure 
length $\tau_{0}$ leads to narrow ranges of sustainable punctual effort $(0,E^{*})$.

We illustrate this previous results by using the numerical methods developed by Del-Valle \cite{Rdv}. The following parameters are employed:
\begin{equation}
\label{parametros}
r_{0}=0.05 \hspace{0.05cm} \textnormal{[time$^{-1}$]}, \quad K=1000 \hspace{0.05cm} \textnormal{[tons]}, \quad q=1 \quad \textnormal{and} \quad \eta=0.25.  
\end{equation}

These values determine $w^{*} \sim 0.567143$ with a respective $E^{*}\sim 0.432857$. The figure
shows the biomass curves associated to the following punctual fishing efforts:
 \begin{equation}
\begin{array}{|c|c|c|c|c|}
\hline
\mbox{Effort} & \mbox{Case 1} & \mbox{Case 2} & \mbox{Case 3} & \mbox{Case 4} \\
\hline
\mbox{$E$}        & 0.1 & 0.4 & 0.4329 & 0.5\\
\hline
\end{array}
\end{equation}

\begin{center}
\begin{figure}
\label{Fig2}
\includegraphics[width=10cm, height=6cm]{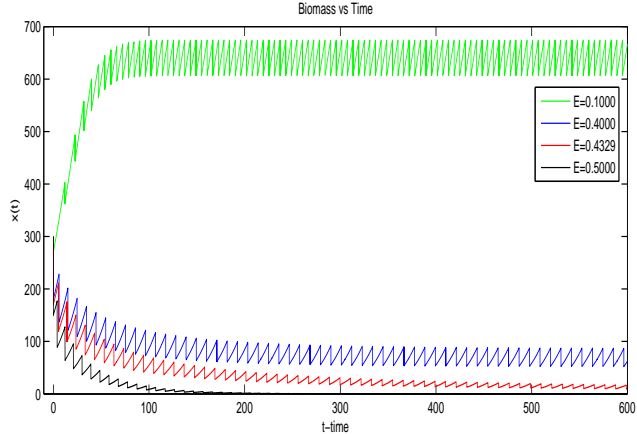}
\caption{Numerical solution of (\ref{dugma}) with Schaefer assumption, closure length defined by (\ref{tauexample}) and parameters (\ref{parametros}). The 
punctual fishing efforts ensuring sustainability satisfy $E\in (0,0.432857)$.} 
\end{figure}
\end{center}

The Figure 3 shows the evolution of the biomass by considering an initial condition $x(0)=300$ tons and four different punctual fishing efforts. As 
stated before, we can see an extinction scenario for any punctual effort bigger than $E^{*}$ (this is the case for $E=0.5$). Finally, it can be observed 
that the lenght of the closure is an increasing function of the punctual effort $E$.

%\subsection{Noisy effort} 
%Now, as a more realistic hypothesis, we consider a non constant effort in every
%harvest instant. We will incorporate additive stochastic noise in a constant effort, the effort
%$E_{k}$ at harvest instant $t_{k}$, (with $k \geq 1$), is given by the function that follows:
%\begin{equation}
%\widetilde{E_{k}}=E+2\alpha\left(\mbox{rand}_{k} -\frac{1}{2}\right),
%\end{equation}
%with some $\alpha>0$ 

%The noise amplitude will have the following values:
%\begin{equation}
%\begin{array}{|c|c|c|c|}
%\hline
%\mbox{Amplitude} & \mbox{Case I} & \mbox{Case II} & \mbox{Case III} \\
%\hline
%\alpha        & 1/100& 1/50 & 1/10\\
%\hline
%\end{array}
%\end{equation}

%Figure \ref{Fig3} shows that the variable closures strategy is a little more sensitive to noise than the fixed closures.

%\begin{center}
%\begin{figure}
%\label{Fig3}
%\includegraphics[width=10cm, height=6cm]{Fig3.eps}
%\caption{Stock trajectories. From up to down greater noise. The first column consider variable closures and the second one fixed closures.} 
%\end{figure}
%\end{center}

\section{Discussion}
The  mains results (Theorem \ref{mainresult} and Corollary \ref{verhuls-pearl}) propose sufficient conditions ensuring the ecological sustainability
of a simple fishery model with impulsive captures, which can be seen as the trade--off between the fishing effort $E$ and the 
maximal closure length $\tau_{0}$. The novelty (compared with harvest instants uniformly distributed in time) is to allow a 
variable length of closures, which has social and economic consequences in a short term. 

The proof of Theorem \ref{mainresult} is carried out by constructing a one--dimensional map (\ref{pitaron-ode1}), whose 
asymptotic behavior inherits crucial features of the IDE--IDT equation. In this context, this approach could be extended in several 
ways. First, notice that assumptions \textbf{(C2)} and \textbf{(C3)} imply that the map (\ref{pitaron-ode1}) is strictly increasing. Whe think that it is possible to 
consider more general maps and obtain less restrictive conditions ensuring convergence towards a fixed point. This remains a future problem
and its main difficulty is the hybrid nature of IDE--IDT equations.

Provided that the fishery is sustainable, an important open problem is to find first order conditions on the punctual fishing effort $E$ that maximizes:
$$
\frac{\phi(E,x^{*}(E))x^{*}(E)}{\tau([1-\phi(E,x^{*}(E))]x^{*}(E))},
$$
the sustainable production per unit time. In other words, the task is the Maximun Sustainable Yield (MSY) problem.

Another extension of this work would be to consider the logistic case (\ref{dugma}), replacing $r>0$ by a continuous 
function $r\colon \mathbb{R}\to \mathbb{R}_{+}$. This problem is interesting by a bioeconomic point of view since periodic and Bohr 
almost periodic functions provide a good tool in order to modeling birth rates with seasonal behavior.

On the other hand, to consider the logistic model (\ref{dugma}) with almost periodic perturbations has mathematical interest 
in itself. Indeed, if $r(\cdot)$ is a positive Bohr almost periodic function and $\tau(\cdot)$ is a constant function, it can be proved that there exist  
a unique almost periodic solution (in the sense of Samolienko and Perestyuk \cite{Samoilenko}), which is globally asymptotically stable. Nevertheless,
there are neither equivalent results nor a study of asymptotic properties when $\tau(\cdot)$ is not constant. A careful development of the qualitative theory
for IDE--IDT seems essential to cope with this kind of problems.

\section*{Appendix}
The following result plays a key role in the proof of Theorem \ref{mainresult}:
\begin{lemma}
\label{discr3}
Let us consider the one dimensional map:
\begin{equation}
\label{dugma-y}
x_{n+1}=f(x_{n}), \quad x_{0}\in [0,K],
\end{equation}
where the function $f\colon [0,K]\mapsto [0,f(K)]\subset [0,K)$ satisfies the following properties:
\begin{itemize}
 \item[a)] $f$ is derivable and $f(0)=0$,
 \item[b)]  $f$ is increasing and $f(K)<K$,
 \item[c)] For any $x\in (0,K]$ it follows that:
$$
0\leq \frac{xf'(x)}{f(x)}<1.
$$
\end{itemize}

\begin{itemize}
\item[i)] If $f'(0)<1$ then it follows that $\lim\limits_{n\to +\infty}x_{n}=0$.
\item[ii)] If $f'(0)>1$ then there exists a unique fixed point $x^{*}\in (0,K)$ a it follows that $\lim\limits_{n\to +\infty}x_{n}=x^{*}$ for any
$x_{0}\in (0,K]$.
\end{itemize}
\end{lemma}

\begin{proof}
\emph{Case $f'(0)<1$}: we can verify that there exists $\delta>0$ such that $f(x)<x$ for any $x\in (0,\delta)$. Let us denote by $A$, the set
of positive fixed points of $f$. If $A=\emptyset$, by using continuity of $f(\cdot)$ it follows that $f(x)<x$ for any $x\in (0,K]$.

If $A\neq \emptyset$, let us define $\xi =\inf\{A\}$. It is straightforward to verify that:
\begin{equation}
\label{mishvaa}
\xi>0 \quad \textnormal{and} \quad f(x)<x \quad \textnormal{for any} \quad x\in (0,\xi).
\end{equation}

Now, by using c) we can deduce that $0\leq f'(\xi)<1$, which implies the existence of 
$\delta>0$ such that $f(x)\geq x$ for any $x\in (\xi-\delta,\xi)$, obtaining a contradiction
with (\ref{mishvaa}).

In consequence $0$ is the unique fixed point in $[0,K]$ and $f(x)<x$ implies that $\{x_{n}\}$ is strictly descreasing
and lowerly bounded. Finally, the convergence towards $x=0$ follows from uniqueness of the fixed point.

\emph{Case $f'(0)>1$}: We can verify that there exists $\delta>0$ such that $f(x)>x$ for 
any $x\in (0,\delta)$. On the other hand, $f(K)<K$ 
combined with continuity of $f$ imply the existence of a fixed point $x^{*}\in (0,K)$ minimal with this property. Hence,
it follows that $f(x)>x$ for any $x\in (0,x^{*})$ and (c) implies that $0\leq f'(x^{*})<1$.  

In order to prove the uniqueness of $x^{*}$, let us define $g(x)=f(x+x^{*})-x^{*}$ and observe that
$g\colon [0,K-x^{*}]\mapsto [0,f(K)-x^{*}]\subset [0,K-x^{*})$ and $g'(0)<1$. Hence, the non--existence
of fixed points on $(x^{*},K)$ and the property $f(x)<x$ for any $x\in (x^{*},K]$ follows as in the previous case.

If $x_{0}\in (x^{*},K]$ it follows that $\{x_{n}\}$ is strictly descreasing and lowerly bounded by $x^{*}$. On the other hand,  if
$x_{0}\in (0,x^{*})$ it follows that $\{x_{n}\}$ is strictly increasing and upperly bounded by $x^{*}$. The convergence towards $x^{*}$
follows from the uniqueness of the fixed point in $(0,K)$.
\end{proof}

\noindent \emph{Example}: The function $f\colon [0,K]\to \mathbb{R}$ defined by $f(x)=\lambda x/(a+x)$ with $a>1$ and $\lambda<2a$ satisfies
straightforwardly the properties a) and b) by choosing $K>a$. Finally, observe that:
$$
0\leq \frac{xf'(x)}{f(x)}=\frac{1}{a+x}<1, \quad \textnormal{for any} \quad x\geq 0,
$$ 
and c) is verified. 

Hence, if $\lambda<a$ (\emph{i.e.} $f'(0)<1$), it follows that the sequence $\{x_{n}\}_{n}$ defined recursively by (\ref{dugma-y}) with $x_{0}\in (0,K]$ 
is convergent to $0$. Otherwise, if $\lambda \in (a,2a)$ (\emph{i.e}, $f'(0)>1$), then the sequence is convergent to $x^{*}=a-\lambda$. 

\vspace{3mm}

\noindent \textbf{Ackowledgements} We would like to express our thanks to professor Luis 
Cubillos (Universidad de Concepci\'on -- Chile) for the data contained in Figure 1. The first author acknowledges the support of Direcci\'on de Investigaci\'on UMCE.

\end{document}